\newtheorem{theorem}{Theorem}
\newtheorem{lemma}[theorem]{Lemma}
\title{A Novel Multiscale Framework for Testing Independence: Efficient Detection of Explicit or Implicit Functional Relationships}
\author[1]{Seetharaman P}
\author[2]{Sagnik Das}
\author[1]{Angshuman Roy}
\affil[1]{Department of Mathematics and Statistics, Indian Institute of Technology, Tirupati}
\affil[2]{Indian Institute of Science Education and Research, Kolkata}
\date{}
\begin{document}

\maketitle
\begin{abstract}
   In this article, we consider the problem of testing the independence between two random variables.
   Our primary objective is to develop tests that are highly effective at detecting associations arising from explicit or implicit functional relationship between two variables.
   We adopt a multiscale approach by analyzing neighborhoods of varying sizes within the dataset and aggregating the results.
   We introduce a general testing framework designed to enhance the power of existing independence tests to achieve our objective.
   Additionally, we propose a novel test method that is powerful as well as computationally efficient.
   The performance of these tests is compared with existing methods using various simulated datasets.
   Additionally, a visualization method has been proposed for exploring the localization of dependence within datasets.
\end{abstract}

\section{Introduction}
Tests of independence play a vital role in statistical analysis.
They are used to determine relationships between variables, validate models, select relevant features from a large pool of features, and establish causal directions, among other applications.
These tests are particularly important in fields such as economics, biology, social sciences, and clinical trials.

The mathematical formulation for testing independence is as follows. Consider two random variables \(X\) and \(Y\) with distribution functions \(F_X\) and \(F_Y\), respectively, and let \(F\) represent their joint distribution function. The objective is to test the null hypothesis \(H_0\) against the alternative hypothesis \(H_1\) based on \(n\) independent and identically distributed (i.i.d.) observations \((x_1, y_1), (x_2, y_2), \ldots, (x_n, y_n)\) drawn from \(F\), where
\begin{equation}
    \begin{cases} 
    H_0: F = F_XF_Y \\
    H_1: F \neq F_XF_Y.
    \end{cases}
    \label{eq:test}
\end{equation}

A substantial amount of research has been conducted on this topic, and it remains an active area of investigation.
Pearson's correlation is perhaps the most well-known classical measure that quantifies linear dependence.
Spearman's rank correlation \citep{spearman1904proof} and Kendall's concordance-discordance statistic \citep{kendall1938new} are used to measure monotonic associations.
\cite{hoeffding1948non} proposed a measure using the \(L_2\) distance between the joint distribution function and the product of marginals.
\cite{szekely2007measuring} introduced distance correlation (dCor), based on energy distance, while \cite{gretton2007kernel} leveraged kernel methods to develop the Hilbert-Schmidt independence criterion (HSIC).
\cite{reshef2011detecting} proposed the maximal information coefficient (MIC), an information-theoretic measure that reaches its maximum when one variable is a function of the other.
%\cite{heller2013consistent} introduced a powerful test by breaking down the problem into multiple \(2\times 2\) contingency table independence tests and aggregating the results.
\citep{heller2013consistent} introduced a powerful test that breaks down the problem into multiple \(2\times 2\) contingency table independence tests and aggregates the results.
\cite{zhang2019bet} proposed a test that is uniformly consistent with respect to total variation distance.
\cite{roy2020somea}, and \cite{roy2020someb} used copula and checkerboard copula approaches to propose monotonic transformation-invariant tests of dependence for two or more variables.
\cite{chatterjee2021new} introduced an asymmetric measure of dependence with a simple form and an asymptotic normal distribution under independence, which attains its highest value only when a functional relationship exists.
These are just a few of the many contributions in this field.

Different tests of independence have unique strengths and limitations.
As demonstrated in Theorem 2.2 of \cite{zhang2019bet}, no test of independence can achieve uniform consistency.
This implies that for any fixed sample size, the power of a test cannot always exceed the nominal level across all alternatives.
Therefore, selecting the most appropriate test for a given scenario is crucial.
For instance, the correlation coefficient is particularly powerful for detecting dependence between two jointly normally distributed random variables.
It is highly effective at identifying linear relationships.
Similarly, Spearman's rank correlation is particularly suited for detecting monotonic relationships.

In this article, we aim to develop tests of independence (as described in Equation \ref{eq:test}) that will be particularly powerful for detecting association between two random variables, \(X\) and \(Y\), that adhere to the following parametric equation:
\begin{equation}
    \begin{cases} 
    X = f(Z) + \epsilon_X \\
    Y = g(Z) + \epsilon_Y,
    \end{cases}
    \label{eq:param}
\end{equation} where \(f\) and \(g\) are continuous functions not constant on any interval, \(Z\) is a continuous random variable defined on an interval, and \(\epsilon_X\) and \(\epsilon_Y\) are independent noise components that are each independent of \(Z\).
As a result, our tests will also be powerful for detecting association between random variables \(X\) and \(Y\) that are functionally related to each other, i.e., when \(Y = f(X) + \epsilon_Y\) or \(X = g(Y) + \epsilon_X\).
In contrast to the method proposed by \cite{chatterjee2021new}, which is powerful for detecting dependence when \(Y\) is a function of \(X\), in our approach, \(X\) and \(Y\) are treated symmetrically.

The article is organized as follows.
In Section 2, we introduce a general testing framework aimed at enhancing the power of existing tests of independence in scenarios described by Equation \eqref{eq:param} and examine its computational complexity. 
In Section 3, we apply this framework to develop a novel test of independence and analyze its computational complexity.
Section 4 presents a performance comparison of our methods on various simulated datasets.
Section 5 extends this analysis to a real-world dataset.
In Section 6, we demonstrate a visualization technique for exploring the localization of dependence in a real-world dataset.
Finally, we conclude with a discussion and summary of our findings.

\section{A General Multiscale Testing Framework}

Let us begin this section with a few examples of continuous dependent random variables \(X\) and \(Y\) that satisfy the parametric equation described in Equation \ref{eq:param}.
We consider three examples, referred to as Example A, B, and C.
\begin{itemize}
    \item Example A: \(X = \Theta + \epsilon_X\) and \(Y = \sin(\Theta) + \epsilon_Y\), where \(\Theta\), \(\epsilon_X\), and \(\epsilon_Y\) are mutually independent, with \(\Theta \sim U(0,2\pi)\) and \(\epsilon_X, \epsilon_Y \sim N(0, (1/20)^2)\).
    \item Example B: \(X = \cos(\Theta) + \epsilon_X\) and \(Y = \sin(\Theta) + \epsilon_Y\).
    \item Example C: \(Y = U + \epsilon_Y\) and \(X = U^2 + \epsilon_X\), where \(U \sim U(-1,1)\) and \(U\) is independent of \(\epsilon_X\) and \(\epsilon_Y\).
\end{itemize}
The scatter plots for these examples are shown in Figure \ref{fig:exampleABC}.
A common feature of all these examples is that within certain neighborhoods around support points, the conditional correlation between \(X\) and \(Y\) is non-zero.
Specifically, there exist points \((x_0, y_0)\) and \((x', y')\) in the support of the distribution such that \(\text{Cor}(X,Y \mid (X,Y) \in N(x,y; |x' - x_0|, |y' - y_0|)\) is non-zero, where \(N(x,y;\epsilon_1,\epsilon_2)\) represents the \(\epsilon_1,\epsilon_2\)-neighborhood of \((x,y)\), defined as \([x-\epsilon_1,x+\epsilon_1] \times [y-\epsilon_2, y+\epsilon_2]\).
In Figure \ref{fig:exampleABC}, red rectangles highlight instances of such neighborhoods with non-zero conditional correlation.
\begin{figure}[h]
\centering
\includegraphics[width=0.75\linewidth]{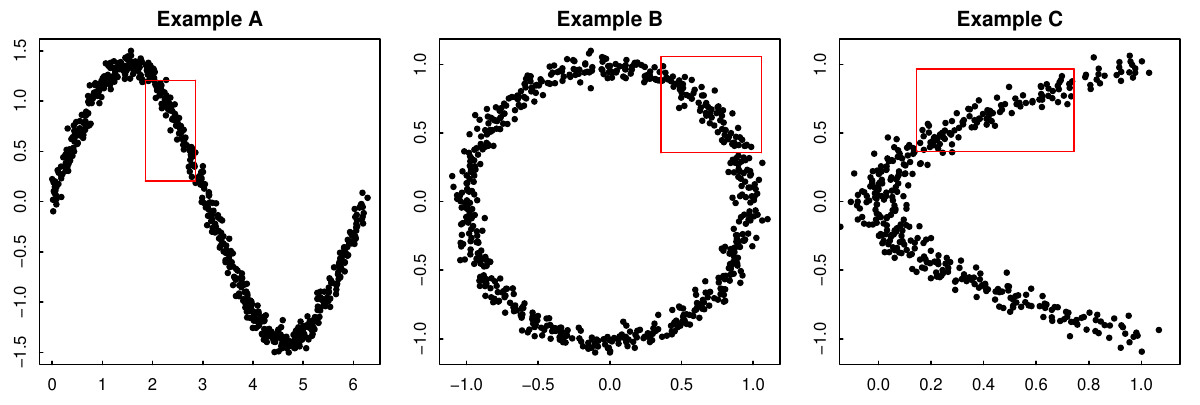}
\caption{Neighborhoods having non-zero conditional correlation in Example A, B, and C.}
\label{fig:exampleABC}
\end{figure}

As the next example, we consider the \(BEX_d\) distribution, originally introduced in \citep{zhang2019bet} where it is defined as the uniform distribution over a set of parallel and intersecting lines given by \(\sum_{i=1}^{2^d-1}\sum_{j=1}^{2^d-1}(|x-c_i|-|y-c_j|)I[|x-c_i|\leq 2^{-d}, |y-c_j|\leq 2^{-d}]=0\), where \(c_i=(2i-1)/2^d\).
Figure \ref{fig:bex_plot} illustrates the \(BEX_d\) distribution for \(d=1,2\), and \(3\).
If \((X,Y)\) follows the \(BEX_d\) distribution, the marginals \(X\) and \(Y\) are dependent but both follow a continuous uniform distribution over \([0,1]\).
Moreover, \(X\) and \(Y\) can be shown to satisfy Equation \ref{eq:param}.
Although \(\text{Cor}(X,Y)=0\) for the \(BEX_d\) distribution, there exist neighborhoods around support points where the conditional correlation is extremely high.
\begin{figure}[h]
\centering
\includegraphics[width=0.75\linewidth]{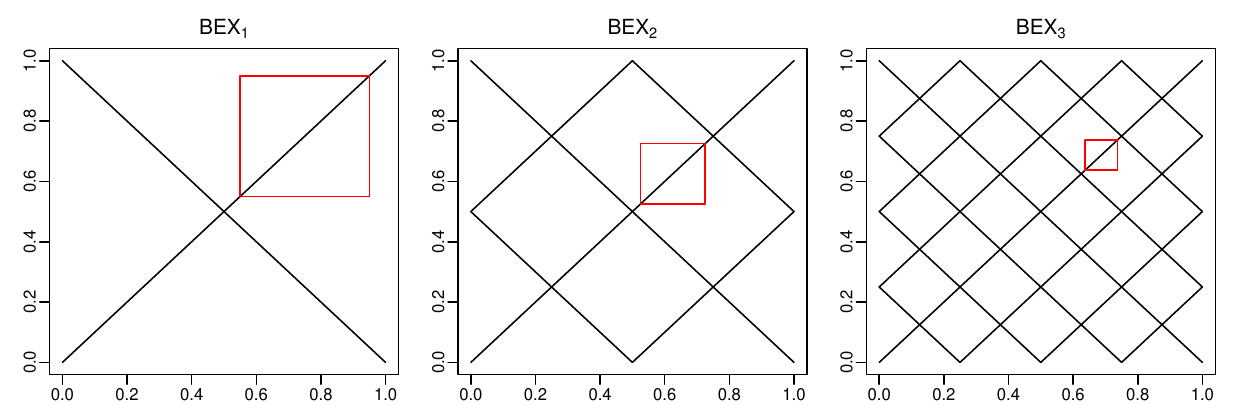}
\caption{Existence of neighborhoods having high conditional correlation values in \(BEX_d\) distributions.}
\label{fig:bex_plot}
\end{figure}
Red rectangles highlight some such neighborhoods with high conditional correlation values in \(BEX_1\), \(BEX_2\), and \(BEX_3\) distributions in Figure \ref{fig:bex_plot}.

These examples suggest that by calculating the test statistics for an existing test of independence across different neighborhoods and aggregating these values meaningfully, we can enhance its power, especially in scenarios described by Equation \eqref{eq:param}.
Building on this idea, we introduce a multiscale approach for existing tests of independence in this section.

Let \(\mathbf{xy}_{1:n}\) denote the \(n\) i.i.d. observations \((x_1,y_1),\ldots,(x_n,y_n)\).
We analyze the dataset using neighborhoods of varying sizes centered at each observation point, with the distances from the center point to other observations serving as our guide for selecting neighborhood sizes.
For a given sample \(\mathbf{xy}_{1:n}\) from a bivariate distribution \(F\), we consider all neighborhoods of the form  \(N(x_i,y_i;|x_j-x_i|,|y_j-y_i|)\) for \(i\neq j\).
Thus, we consider a total of \(n(n-1)\) neighborhoods.
For notational convenience, we denote \(N(x_i,y_i;|x_j-x_i|,|y_j-y_i|)\) by \(N_{i,j}\).

Let \(T\) be a test statistic for testing independence between two univariate random variables.
Let us use the notation \(T(\mathbf{xy}_{1:n})\) to denote the value of the test statistic \(T\) computed on the sample \(\mathbf{xy}_{1:n}\).
Here, we consider only those test statistics \(T\) such that \(T(\mathbf{xy}_{1:n})\geq 0\), \(T(\mathbf{xy}_{1:n})\rightarrow 0\) in probability as \(n\rightarrow \infty\) under independence, and the rejection region is of the form \(\{T(\mathbf{xy}_{1:n})>C_n\}\) for some constant \(C_n>0\).
For example, Person's correlation coefficient cannot be considered as \(T\), but its absolute value can be used.
We define \(T_{i,j}:=T(\mathbf{xy}_{1:n}\cap N_{i,j})\), that is, \(T_{i,j}\) is the value of the test statistic \(T\) when evaluated on the observations that fall within the neighborhood \(N_{i,j}\).
If \(T\) is undefined on \(\mathbf{xy}_{1:n}\cap N_{i,j}\) (it may happen when \(x_i=x_j\) or \(y_i=y_j\)), we set \(T_{i,j}:=0\).
One naive way to aggregate all the findings from different neighborhoods is to sum \(T_{i,j}\)'s, that is, to consider \(\sum_{1\leq i\neq j\leq n} T_{i,j}\) as our test statistics.
The problem with this summation is that if the dependency information is limited to relatively small neighborhoods, summing all the \(T_{i,j}\) values might introduce noise into this information.
To address this issue, we separate \(T_{i,j}\)'s in \((n-1)\) distinct groups according to the proximity between \((x_i,y_i)\) and \((x_j,y_j)\).
A detailed description is as follows.

For each observation \((x_i,y_i)\), we order the remaining observations according to their Euclidean distance from \((x_i,y_i)\).
Random tie-breaking is to be used in case of ties while ordering.
Let \((x_{\pi_i(1)},y_{\pi_i(1)}),\) \(\ldots,\) \((x_{\pi_i(n-1)},y_{\pi_i(n-1)})\) be the observations ordered by their Euclidean distance from \((x_i,y_i)\) in ascending order.
Thus, \(N_{i,\pi_i(k)}\) is the neighborhood of \((x_i,y_i)\) that has \(k\)-th nearest neighbor of \((x_i,y_i)\) at one of its vertices.
It is easy to see that for a fixed \(i\), the length of the diagonal of \(N_{i,\pi_i(k)}\) increases as \(k\) increases.
We average the value of the test statistic \(T\) based on the sample \(\mathbf{xy}_{1:n}\cap N_{i,\pi_i(k)}\) keeping \(k\) fixed and varying over \(i\) over \(1,\ldots,n\).
We denote this average as \(T_{[k]}\), that is, \(T_{[k]}:=n^{-1}\sum_{i=1}^nT(\mathbf{xy}_{1:n}\cap N_{i,\pi_i(k)})=n^{-1}\sum_{i=1}^n T_{i,\pi_i(k)}\).
To determine how extreme the value of \(T_{[k]}\) is, we need to know its distribution under the independence of the marginals, which can be estimated by resampling technique.
We discuss this in detail next.

\begin{figure}[h]
\centering
\includegraphics[width=0.4\linewidth]{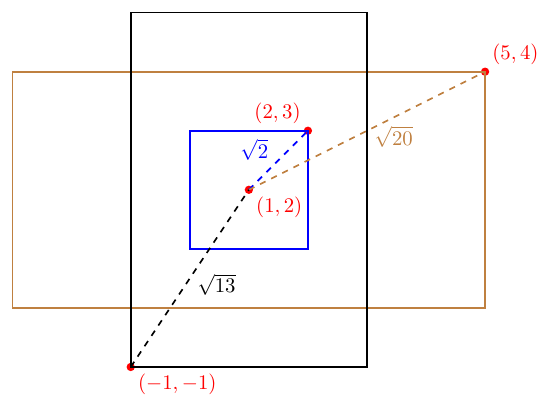}
\caption{Given the four observation points \((1,2)\), \((2,3)\), \((-1,-1)\), and \((5,4)\), the areas enclosed by the blue, black, and orange rectangles represent the neighborhoods selected around the observation point \((1,2)\).}
\label{fig:neighborhood_plot}
\end{figure}

Given a sample \(\mathbf{xy}_{1:n}\), a randomly permuted sample \(\mathbf{xy}^{(\tau)}_{1:n}\) can be generated, where \(\tau(1),\ldots,\tau(n)\) is a random permutation of \(1,\ldots,n\) and \(\mathbf{xy}^{(\tau)}_{1:n}:=\{(x_1,y_{\tau(1)}),\ldots,(x_n,y_{\tau(n)})\}\).
We calculate \(T_{[1]},\ldots,T_{[n-1]}\) in the same way on the sample \(\mathbf{xy}^{(\tau)}_{1:n}\) and denote their values with \(T_{[1]}^{(\tau)},\ldots,T_{[n-1]}^{(\tau)}\).
For \(B\) independent random permutations \(\tau_1,\ldots,\tau_B\), we can compute \(T_{[k]}^{(\tau_1)},\ldots,T_{[k]}^{(\tau_B)}\) for \(1\leq k\leq n-1\).
According to the permutation test principle, the empirical distribution of \(T_{[k]}^{(\tau_1)},\ldots,T_{[k]}^{(\tau_B)}\) is an estimator of the distribution of \(T_{[k]}\) under independence.
We can estimate the mean of \(T_{[k]}\) under independence by \(\bar{T}_{[k]}^{H_0}:=B^{-1}\sum_{i=1}^B T_{[k]}^{(\tau_i)}\) and the standard deviation of \(T_{[k]}\) under independence by \(s_{[k]}^{H_0}:=\sqrt{B^{-1}\sum_{i=1}^B (T_{[k]}^{(\tau_i)}-\bar{T}_{[k]}^{H_0})^2}\).
Next we compute Z-score of \(T_{[k]}\) with respect to its distribution under independence by \(z_k:=\frac{T_{[k]}-\bar{T}_{[k]}^{H_0}}{s_{[k]}^{H_0}}\).
If \(s_{[k]}^{H_0}=0\), we define \(z_k=0\).
The Z-scores \(z_1,\ldots,z_{n-1}\) suggest how extreme the values of \(T_{[1]},\ldots,T_{[n-1]}\) are.
Analyzing them for a sample \(\mathbf{xy}_{1:n}\) can give us valuable insight into the dependence structure.
We demonstrate this point below using various bivariate distributions.

In this illustration, we consider two popular test statistics as \(T\): absolute value of Pearson's correlation and distance correlation.
We denote these two test statistics by \(T^{cor}\) and \(T^{dcor}\) respectively.
We consider eight bivariate distributions, and a description of each is provided in Table \ref{tab:zk_plots_datasets}.
\begin{table}[h]
\centering
\scalebox{0.85}{
\begin{tabular}{l l}
\hline
\hline
Distribution & Description \\
\hline
\hline
(a) Square & \(X,Y\stackrel{\text{i.i.d.}}{\sim} U(-1,1)\)\\
\hline
(b) Straight line & \(X\sim U(-1,1), Y=X\)\\
\hline
(c) Noisy straight line & \(U\sim U(-1,1), X=U+e_X, Y=U+e_Y,\)\\
& \(e_X,e_Y\stackrel{\text{i.i.d.}}{\sim} N(0,0.1)\)\\
\hline
(d) Sine & \(X\sim U(0,2\pi), Y=\sin(5X),\)\\
\hline
(e) Circle & \(\Theta\sim U(0,2\pi), X=\cos(\Theta), Y=\sin(\Theta)\) \\
\hline
(f) Noisy parabola & \(U\sim U(-1,1), X=U^2+\epsilon_X, Y=U+\epsilon_Y,\) \\
& \(e_X,e_Y\stackrel{\text{i.i.d.}}{\sim} N(0,0.25)\)\\
\hline
(g) \(BEX_2\) & As described in Section 2\\
\hline
(h) BVN, rho=0.5 & \((X,Y)\) follows standard bivariate normal\\
& distribution with a correlation coefficient of 0.5\\
\hline
\hline
\end{tabular}
}
\caption{Description of distributions considered}
\label{tab:zk_plots_datasets}
\end{table}
The scatter plots of these distributions are presented in Figure \ref{fig:zk_plots_datasets}.
It is easy to see that except for the `Square' distribution, \(X\) and \(Y\) are dependent in all the other distributions.

From each distribution, we generated 50 i.i.d. observations and calculated \(z_1,\ldots,z_{49}\).
By repeating this step independently 1000 times, we then calculated the average values \(\bar{z}_1,\ldots,\bar{z}_{49}\).
We plotted these average values for each distribution in Figure \ref{fig:zk_plots}.
Under independence, it is evident that the expected value of \(z_k\) is 0.
Therefore, if \(z_k\) deviates from 0, it will be indicative of dependence.
\begin{figure}[h]
\centering
\includegraphics[width=0.85\linewidth]{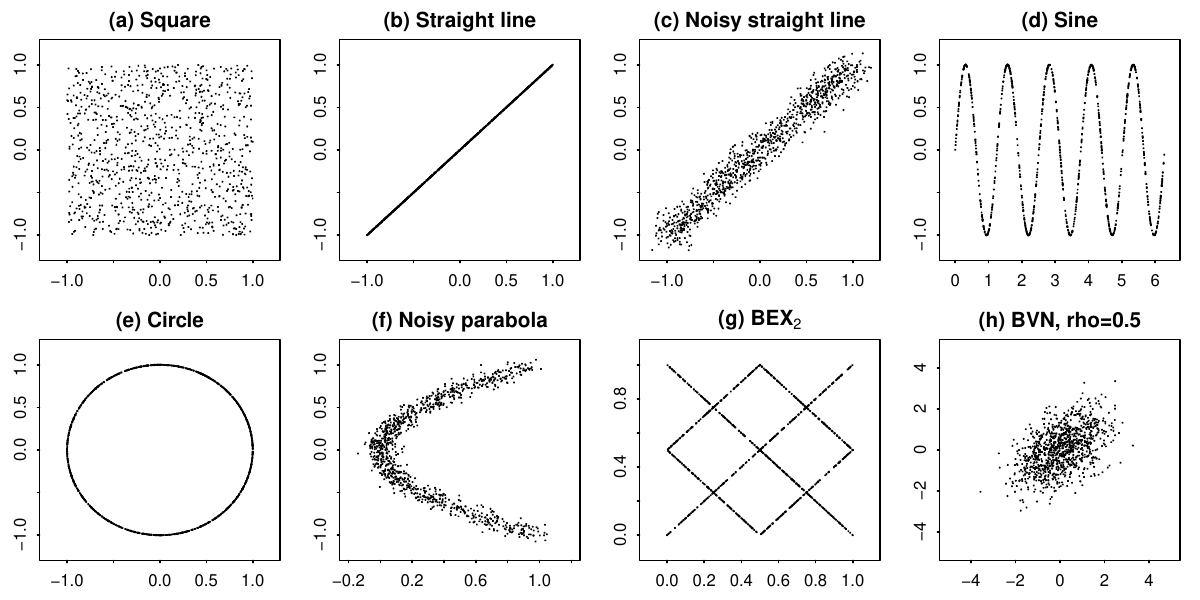}
\caption{Scatter plots of different datasets used}
\label{fig:zk_plots_datasets}
\end{figure}

\begin{figure}[h]
\centering
\includegraphics[width=0.85\linewidth]{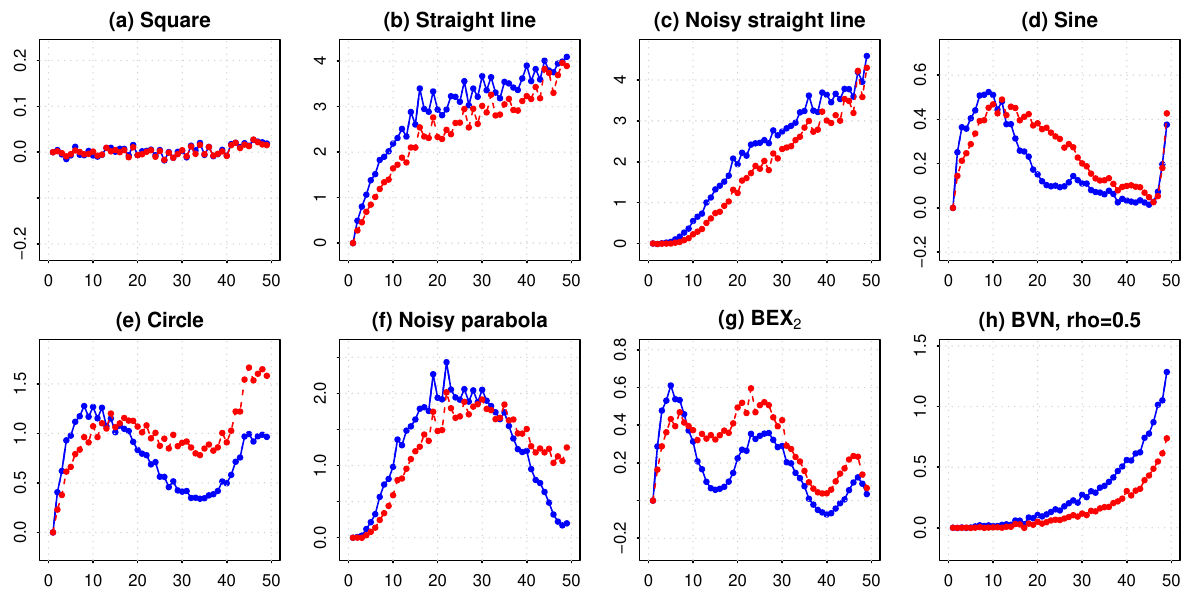}
\caption{The plot of \(\bar{z}_1,\ldots,\bar{z}_{49}\) for different distributions. The blue and red points are corresponding to \(T^{cor}\) and \(T^{dcor}\) statistics respectively.}
\label{fig:zk_plots}
\end{figure}
From Figure \ref{fig:zk_plots}, we observe that the average Z-score values are close to 0 for `Square' distribution as \(X\) and \(Y\) are independent.
For the `Straight line' and `Noisy straight line' distributions, the average Z-scores are higher in larger neighborhoods for both \(T^{cor}\) and \(T^{dcor}\).
However, in the `Noisy straight line', the dependence information is less apparent in smaller neighborhoods compared to the `Straight line'.
For the `Sine' distribution, the dependence information is clearly noticeable in smaller neighborhoods. 
An interesting phenomenon occurs with the `Circle' distribution, where most neighborhoods, including the largest ones, contain arcs, allowing both \(T^{cor}\) and \(T^{dcor}\) to detect dependence effectively in most neighborhoods.
In the `Noisy parabola', the dependence is most prominent in mid-sized neighborhoods.
For the more complex `\(BEX_2\)' distribution, these statistics detect the dependence patterns in smaller and mid-sized neighborhoods, but less so in larger ones.
Finally, in `BVN, rho=0.5', the dependence information is only clearly evident in larger neighborhoods.

It follows clearly from the above illustration that test statistics such as absolute value of correlation and distance correlation sometimes detect dependence information better in smaller neighborhoods, sometimes in mid-sized, and sometimes in larger neighborhoods.
Therefore, it makes sense to aggregate information from all the Z-scores \(z_1,\ldots,z_{n-1}\).
We propose an aggregation method in the following paragraph.

First, we observe that under independence, the expected value of \(z_k\) is \(0\) for \(k=1,\ldots,n-1\).
Under dependence, if the \(T_{i,j}\)'s are stochastically larger, \(T_{[k]}\) will also be stochastically larger.
In that case, the expected values of \(z_k\) will also be positive.
Let \(\mu_k\) be the expected value of \(z_k\), that is, \(\mu_k=E[z_k]\).
Thus testing \(H_0\) against \(H_1\) can be done by testing \(H_0^{'}: \mu_k=0~\forall k\) vs. \(H_1^{'}: \exists~ k~\text{such that}~\mu_k>0\).
To test this, we suggest the following test statistic \(\Psi_n:=\sum_{k=1}^{n-1}(\max\{z_k,0\})^2\).
Clearly, a high value of \(\Psi_n\) presents evidence against the null hypothesis.
To determine the distribution of \(\Psi_n\) under \(H_0\), we again use resampling approach by utilizing \(B\) randomly permuted samples \(\mathbf{xy}_{1:n}^{(\tau_1)},\ldots,\mathbf{xy}_{1:n}^{(\tau_B)}\) which are already in our disposal.
Similar how we computed \(z_1,\ldots,z_{n-1}\) for \(\mathbf{xy}_{1:n}\) using permuted samples \(\mathbf{xy}_{1:n}^{(\tau_1)},\ldots,\mathbf{xy}_{1:n}^{(\tau_B)}\), we compute \(z_1^{(\tau_i)},\ldots,z_{n-1}^{(\tau_i)}\) for \(\mathbf{xy}_{1:n}^{(\tau_i)}\) based on permuted samples \(\mathbf{xy}_{1:n}^{(\tau_1)},\ldots,\mathbf{xy}_{1:n}^{(\tau_{i-1})},\mathbf{xy}_{1:n}^{(\tau_{i+1})},\mathbf{xy}_{1:n}^{(\tau_B)}\) for \(i=1,\ldots,B\).
Next we calculate \(\Psi_n^{(\tau_i)}:=\sum_{k=1}^{n-1}(\max\{z_k^{(\tau_i)},0\})^2\) for \(i=1,\ldots,B\).
Finally we determine the p-value of \(\Psi_n\) by \(p=B^{-1}\sum_{i=1}^B I[\Psi_n \leq \Psi_n^{(\tau_i)}]\).
We reject \(H_0\) if the p-value turns out to be less than the level of significance.

Assume that computing \(T(\mathbf{xy}_{1:n})\) requires \(\mathcal{O}(\Theta(n))\) operations.
In that case, computing \(T_{i,j}\) collectively for \(1\leq i\neq j\leq n\) requires \(\mathcal{O}( n^2 \Theta(n))\) operations.
For a fixed \(1 \leq i \leq n\), the merge sort algorithm allows us to compute \(\pi_i\) in \(\mathcal{O}(n \log n)\) operations.
Therefore, determining \(T_{[1]},\ldots,T_{[n-1]}\) collectively takes \(\mathcal{O}(n^2\Theta(n) + n^2 \log n)\) operations (recall that  \(T_{[k]} = n^{-1} \sum_{i=1}^n T_{i,\pi_i(k)}\)).
As a result, the computational complexity of computing \(z_1,\ldots,z_{n-1}\) together also is \(\mathcal{O}(n^2\Theta(n) + n^2 \log n)\) assuming that number of randomly permuted samples \(B\) is constant with respect to \(n\).
Thus we conclude that \(\mathcal{O}(n^2\Theta(n) + n^2 \log n)\) operations are required for computing the test statistics \(\Psi_n = \sum_{k=1}^{n-1} (\max\{z_k, 0\})^2\).

Since the correlation coefficient can be calculated in linear time, \(\Theta(n) = n\) when \(T = T^{cor}\).
Thus, the complexity of calculating \(\Psi_n\) when \(T = T^{cor}\) is \(\mathcal{O}(n^3)\).
On the other hand, the distance correlation between two univariate random variables can be computed in \(\mathcal{O}(n \log n)\) operations \citep[see,][]{chaudhuri2019fast}.
Therefore, the complexity of calculating \(\Psi_n\) when \(T = T^{dcor}\) is \(\mathcal{O}(n^3 \log n)\).\\

\noindent\fbox{%
\parbox{\textwidth}{%
\underline{Algorithm for General Multiscale Testing Framework}\\

Input: set of bivariate observations \(xy_{1:n}\), test statistics \(T\), number of permutations \(B\)\\

\begin{enumerate}[topsep=0pt,itemsep=-1ex,partopsep=1ex,parsep=1ex]
    \item Set \(i=1\).
    \item For \(k = 1, \dots, n-1\), determine \(\pi_i(k)\) such that \((x_{\pi_i(k)},y_{\pi_i(k)})\in xy_{1:n}\) is the \(k\)-th nearest neighbor of \((x_i,y_i)\) in terms of Euclidean distance. Here \((x_i,y_i)\) is the \(i\)-th observation in \(xy_{1:n}\).
    \item If \(i<n\), increase \(i\) by 1 and go to Step 2.
    \item For \(k=1,\ldots,n-1\), compute \(
    T_{[k]} := \frac{1}{n} \sum_{i=1}^n T(xy_{1:n} \cap N_{i, \pi_i(k)}),\) where \(N_{i, \pi_i (k)}:=[x_i-|x_i-x_{\pi_i(k)}|,x_i+|x_i-x_{\pi_i(k)}|]\times[y_i-|y_i-y_{\pi_i(k)}|,y_i+|y_i-y_{\pi_i(k)}|]\).
    \item Set \(b=1\).
    \item Generate  \(xy^{(\tau_b)}_{1:n} := \{(x_1, y_{\tau_b(1)}), \dots, (x_n, y_{\tau_b(n)})\}\), where \(\tau_b\) is a random permutation on \(\{1, \dots, n\}\). Here \((x_i,y_i)\) is the \(i\)-th observation in \(xy_{1:n}\).
    \item Perform Steps 1 to 4 using \(xy^{(\tau_b)}_{1:n}\) instead of \(xy_{1:n}\) to compute \(T^{(\tau_b)}_{[1]},\ldots,T^{(\tau_b)}_{[n-1]}\).
    \item If \(b<B\), increase \(b\) by 1 and go to Step 6.
    \item Calculate \(\bar{T}^{H_0}_{[k]} := \frac{1}{B} \sum_{i=1}^B T^{(\tau_i)}_{[k]}\) and \(s^{H_0}_{[k]} := \sqrt{\frac{1}{B-1} \sum_{i=1}^B \left(T^{(\tau_i)}_{[k]} - \bar{T}^{H_0}_{[k]}\right)^2}\).
    \item For \(k = 1, 2, \dots, n-1\), compute \(z_k := \frac{T_{[k]} - \bar{T}^{H_0}_{[k]}}{s^{H_0}_{[k]}}\). Set \(z_k: = 0\) if \(s^{H_0}_{[k]} = 0\).
    \item Compute \(\Psi_n := \sum_{k=1}^{n-1} \left(\max\{z_k, 0\}\right)^2\).
    \item Reset \(b=1\).
    \item For \(k = 1, 2, \dots, n-1\), compute \(z_k^{(\tau_b)} := \frac{T^{(\tau_b)}_{[k]} - \bar{T}^{H_0}_{[k]}}{s^{H_0}_{[k]}}\). Set \(z_k^{(\tau_b)}: = 0\) if \(s^{H_0}_{[k]} = 0\).
    \item Compute \(\Psi_n^{(\tau_b)} := \sum_{k=1}^{n-1} \left(\max\{z_k^{(\tau_b)}, 0\}\right)^2\).
    \item If \(b<B\), increase \(b\) by 1 and go to Step 13.
    \item Determine the p-value of \(\Psi_n\) as: \(p = \frac{1}{B} \sum_{i=1}^B I[\Psi_n \leq \Psi^{(\tau_i)}_n]\).
\end{enumerate}
}%
}

\section{A Special Test Method}

In this section, we introduce a testing method that closely follows the framework proposed in the previous section, with a few modifications.
\(X\) and \(Y\) are assumed to be continuous random variables in this setup.
The key distinction in this approach lies in the manner in which the \(T_{i,j}\) values are computed.

Given a sample \(\mathbf{xy}_{1:n}\), we begin by fixing two observations \((x_i, y_i)\) and \((x_j, y_j)\), where \(i \neq j\).
Similar to the previous method, we define the neighborhood \(N_{i,j} = N(x_i, y_i; |x_j - x_i|, |y_j - y_i|)\), centered at \((x_i, y_i)\) with \((x_j, y_j)\) positioned at a corner of the rectangle defined by the neighborhood.
Within this neighborhood, we partition the space into four quadrants, classifying observations according to whether the \(x\)-coordinate is less than or greater than \(x_i\) and whether the \(y\)-coordinate is less than or greater than \(y_i\).
Subsequently, a \(2 \times 2\) contingency table is constructed to count the number of observations in each of these quadrants (see Table \ref{tab:cont_table}).
\begin{table}[h]
\centering
\scalebox{0.85}{
\begin{tabular}{c|c|c}
\hline 
  &  \(x_k < x_i\)  & \(x_k > x_i\) \\
\hline 
\(y_k > y_i\)  & \(a_{i,j}\) & \(b_{i,j}\) \\
\(y_k < y_i\)  & \(c_{i,j}\) & \(d_{i,j}\) \\
\hline 
\end{tabular}
}
\caption{Number of \((x_k, y_k)\) that belong to one of the four quadrants of \(N_{i,j}\).}
\label{tab:cont_table}
\end{table}

Let \(a_{i,j}, b_{i,j}, c_{i,j}, d_{i,j}\) represent the frequencies in this contingency table, where \(a_{i,j} = \#\{(x_k, y_k) : (x_k, y_k) \in \mathbf{xy}_{1:n} \cap N_{i,j}, x_k < x_i, y_k > y_i\}\), \(b_{i,j} = \#\{(x_k, y_k) : (x_k, y_k) \in \mathbf{xy}_{1:n} \cap N_{i,j}, x_k > x_i, y_k > y_i\}\), \(c_{i,j} = \#\{(x_k, y_k) : (x_k, y_k) \in \mathbf{xy}_{1:n} \cap N_{i,j}, x_k < x_i, y_k < y_i\}\), and \(d_{i,j} = \#\{(x_k, y_k) : (x_k, y_k) \in \mathbf{xy}_{1:n} \cap N_{i,j}, x_k > x_i, y_k < y_i\}\).

\(T_{i,j}\) is defined as the absolute value of the phi coefficient calculated from this contingency table:
\[
T_{i,j} := \frac{|a_{i,j} d_{i,j} - b_{i,j} c_{i,j}|}{\sqrt{(a_{i,j} + b_{i,j})(c_{i,j} + d_{i,j})(a_{i,j} + c_{i,j})(b_{i,j} + d_{i,j})}}.
\]
If \((a_{i,j} + b_{i,j})(c_{i,j} + d_{i,j})(a_{i,j} + c_{i,j})(b_{i,j} + d_{i,j}) = 0\), we set \(T_{i,j} = 0\).

Next, we employ the same framework described in the previous section to determine a p-value.
For notational convenience, the test statistic \(T\) used here will be referred to as \(T^{phi}\).

Following the setup in Figure \ref{fig:zk_plots}, we plotted the average Z-scores for the \(T^{phi}\) statistic across various neighborhood sizes, as shown in Figure \ref{fig:zk_plots2} to gain a deeper understanding of its underlying mechanism.
These values are represented by red dots.
For comparison, the average Z-scores for the \(T^{cor}\) statistic are plotted alongside and represented by the blue dots.

\begin{figure}[h]
\centering
\includegraphics[width=0.85\linewidth]{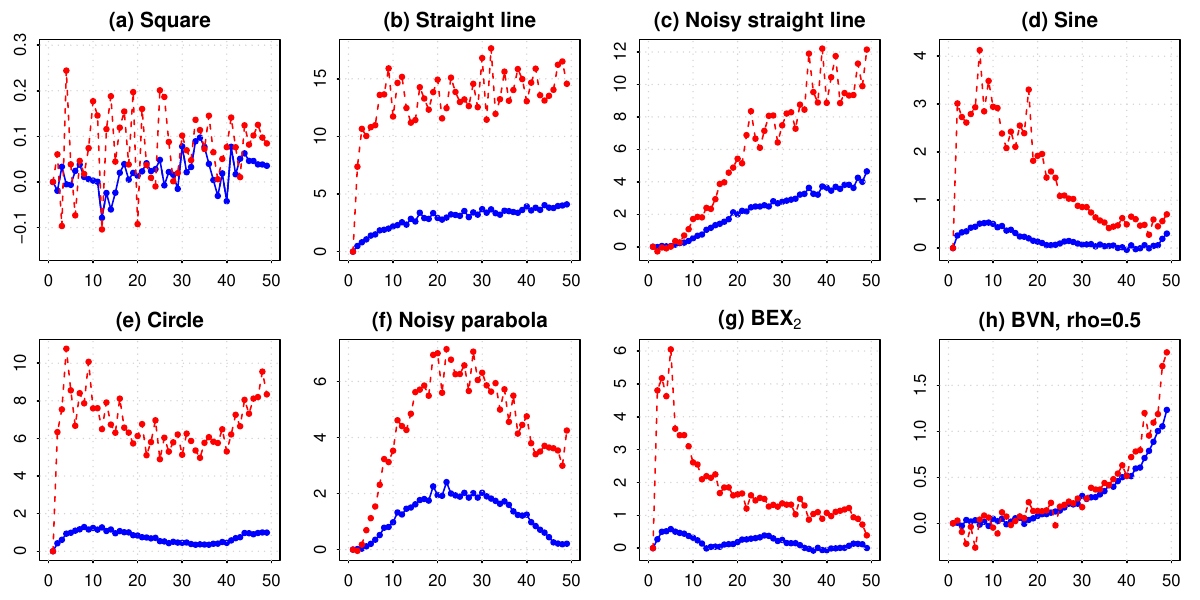}
\caption{Plot of \(\bar{z}_1, \ldots, \bar{z}_{49}\) for different distributions. The blue and red points correspond to \(T^{cor}\) and \(T^{phi}\) statistics, respectively.}
\label{fig:zk_plots2}
\end{figure}

As shown in Figure \ref{fig:zk_plots2}(a), the Z-scores for \(T^{phi}\) exhibit higher variance compared to \(T^{cor}\) under independence.
Figures \ref{fig:zk_plots2}(b) through \ref{fig:zk_plots2}(g) demonstrate that \(T^{phi}\) consistently yields higher Z-scores than \(T^{cor}\) for the `Straight line,' `Noisy straight line,' `Sine,' `Circle,' `Noisy parabola,' and `\(BEX_2\)' distributions.
Additionally, the average Z-score plot for \(T^{phi}\) appears to be more wiggly compared to \(T^{cor}\).
In the `BVN, rho=0.5' distribution, the average Z-scores of \(T^{phi}\) are comparable to those of \(T^{cor}\).

For a fixed \(1 \leq i \neq j \leq n\), calculating \(T_{i,j}\) requires counting the values \(a_{i,j}, b_{i,j}, c_{i,j},\) and \(d_{i,j}\).
A straightforward approach to obtain these counts is to check each observation in \(\mathbf{xy}_{1:n}\) individually.
This process involves \(\mathcal{O}(n)\) operations to compute \(T_{i,j}\).
As established in the previous section, this naive approach requires \(\mathcal{O}(n^3)\) operations to compute the test statistics.
However, by applying the algorithm described in Lemma \ref{lem:complexity}, which computes \(T_{i,j}\) for all \(1 \leq j \neq i \leq n\) simultaneously in \(\mathcal{O}(n \log n)\) operations for a fixed \(i\), the computational complexity can be significantly reduced.
When this algorithm is implemented, calculating \(T_{[1]}, \ldots, T_{[n-1]}\) collectively will require \(\mathcal{O}(n^2 \log n)\) operations.
As a result, the complexity of computing the test statistics \(\Psi_n\) is reduced to \(\mathcal{O}(n^2 \log n)\).

\section{Performance on simulated datasets}

In this section, we compare the performance of our proposed test methods on various simulated datasets with that of existing tests in the literature.
From the proposed general testing framework, we select two well-known test statistics - the absolute value of Pearson's correlation and distance correlation - as \(T\).
We denote the resulting test statistic \(\Psi_n\) in these two cases as \(\Psi_n^{cor}\) and \(\Psi_n^{dcor}\), respectively.
We denote the test statistic \(\Psi_n\) for our proposed special test method as \(\Psi_n^{phi}\).
As pointed out in previous sections, the computational complexities of \(\Psi_n^{phi}, \Psi_n^{cor}\), and \(\Psi_n^{dcor}\) are \(\mathcal{O}(n^2\log n), \mathcal{O}(n^3)\), and \(\mathcal{O}(n^3 \log n)\), respectively.
For our proposed tests, we created an R package, `PAIT,’ which is available at \url{https://github.com/angshumanroycode/PAIT}.

From existing tests of independence, we selected the following popular tests: dCor \citep{szekely2007measuring}, HSIC \citep{gretton2007kernel}, HHG \citep{heller2013consistent}, MIC \citep{reshef2011detecting}, Chatterjee's correlation (xicor) \citep{chatterjee2021new} and BET \citep{zhang2019bet}.
For performing dCor, HSIC, HHG, and xicor tests, we used R packages \textit{energy} \citep{energyR}, \textit{dHSIC} \citep{dHSICR}, \textit{HHG} \citep{HHGR}, and \textit{XICOR} \citep{XICORR}, respectively.
We calculated the MIC statistic by using the `mine\_stat' function from the R package \textit{minerva} \citep{minervaR} and subsequently performed a permutation test.
For executing BET test, we used the R code provided in the supplemental material of \citep{zhang2019bet}.
We fixed the nominal level at \(5\%\).
Except for BET, all other tests were performed using the permutation principle, with the number of permutations set to 1000.
The empirical power of a test is determined by calculating the proportion of times it rejects the null hypothesis over 1000 independently generated samples of a specific size from a given distribution.

To ensure our proposed tests adhere to the specified significance level, we calculated the empirical power for two bivariate distributions with independent marginals. 
In the first case, both marginals followed standard normal distributions, while in the second case, they followed uniform distributions over \([0,1]\). 
The significance level was set at 5\%.
Table \ref{tab:sig_level} presents a summary of the empirical power of various tests for sample sizes of 10, 20, and 50.
\begin{table}
    \centering
    \scalebox{0.85}{
    \begin{tabular}{cccccccccc}
        \hline
        \hline 
        \(n\) & \(\Psi_n^{phi}\) & \(\Psi_n^{cor}\) & \(\Psi_n^{dcor}\) & HHG & HSIC & dCor & MIC & XICOR & BET\\
        \hline
        \hline 
        \multicolumn{10}{c}{Independent \(N(0,1)\) marginals}\\
        \hline
         10 & 0.048 & 0.058 & 0.059 & 0.041 & 0.051 & 0.053 & 0.008 & 0.037 & 0.003 \\
         20 & 0.047 & 0.046 & 0.050 & 0.042 & 0.050 & 0.034 & 0.046 & 0.040 & 0.034 \\
         50 & 0.044 & 0.050 & 0.041 & 0.055 & 0.055 & 0.053 & 0.060 & 0.061 & 0.053 \\
         \hline
        \multicolumn{10}{c}{Independent \(U(0,1)\) marginals}\\
        \hline
         10 & 0.051 & 0.047 & 0.047 & 0.052 & 0.048 & 0.045 & 0.009 & 0.032 & 0.000 \\
         20 & 0.050 & 0.049 & 0.050 & 0.041 & 0.040 & 0.054 & 0.044 & 0.037 & 0.040 \\
         50 & 0.052 & 0.053 & 0.053 & 0.049 & 0.048 & 0.057 & 0.055 & 0.050 & 0.038 \\
         \hline 
         \hline 
    \end{tabular}
    }
    \caption{Empirical power of test methods under independent marginals.}
    \label{tab:sig_level}
\end{table}
All test methods, except MIC, XICOR, and BET, maintained an empirical power close to 5\% across all sample sizes.
MIC, XICOR, and BET exhibited conservative behavior for small sample sizes but approached the nominal significance level as the sample size increased.

First, we considered eight jointly continuous distributions, namely `Doppler', `\(BEX_2\)', `Lissajous curve A', `Lissajous curve B', `Rose curve', `Spiral', `Tilted square', and `Five clouds'.
Their descriptions are provided in Table \ref{tab:8_distributions}.
Scatter plots for each of these distributions are presented in Figure \ref{fig:8_distributions_datasets} in Appendix B.
\begin{table}[h]
\centering
\scalebox{0.75}{
\begin{tabular}{l l}
\hline
\hline
Distribution & Description \\
\hline
\hline
(a) Doppler & \(X\sim U(0,1/2), Y=\sqrt{X}\sin(1/X)\)\\
\hline
(b) \(BEX_2\) & As described in Section 2\\
\hline
(c) Lissajous curve A & \(\Theta\sim U(0, 2\pi), X=\sin(3\Theta+3\pi/4)+\epsilon_X, Y=\sin(\Theta)+\epsilon_Y\)\\
\hline
(d) Lissajous curve B & \(\Theta\sim U(0, 2\pi), X=\sin(4\Theta+\pi/2)+\epsilon_X, Y=\sin(3\Theta)+\epsilon_Y\)\\
\hline
(e) Rose curve & \(\Theta\sim U(0, 2\pi), R=\cos(4\Theta), X=R\cos(\Theta)+\epsilon_X/2, Y=R\sin(\Theta)+\epsilon_Y/2\)\\
\hline
(f) Spiral & \(\Theta\sim U(0, 4\pi), R=\Theta/(2\pi), X=R\cos(\Theta)+\epsilon_X, Y=R\sin(\Theta)+\epsilon_Y\)\\
\hline
(g) Tilted square & \(U,V\stackrel{\text{i.i.d.}}{\sim} U(-1, 1), X=\cos(\pi/3)U-\sin(\pi/3)V, Y=\sin(\pi/3)U+cos(\pi/3)\)V\\
\hline
(h) Five clouds & \((U,V)\) uniform on \(\{(0,0),(0,1),(0.5,0.5),(1,0),(1,1)\}\),\\
& \(X=U+2\epsilon_X, Y=V+2\epsilon_Y\)\\
\hline
\hline
\end{tabular}
}
\caption{The description of `Doppler', `\(BEX_2\)', `Lissajous curve A', `Lissajous curve B', `Rose curve', `Spiral', `Tilted square', and `Five clouds' distributions.  Here \(\epsilon_X, \epsilon_Y\) are i.i.d. random variables following \(N(0,1/30^2)\), independent of \(U, V\), and \(\Theta\).}
\label{tab:8_distributions}
\end{table}
It can be observed from the table that in the `Doppler' distribution, \(Y\) is directly a function of \(X\).
In `Lissajous curve A', `Lissajous curve B', `Rose curve', and `Spiral', \(X\) and \(Y\) are related to each other in form of an implicit function.
`Tilted square' and `Five clouds' are two distributions where \(X\) and \(Y\) are dependent, but their correlation is 0.

The empirical powers of all test methods on these eight distributions are presented in Figure \ref{fig:8_distributions_power}.
\begin{figure}[h]
\centering
\includegraphics[width=\linewidth]{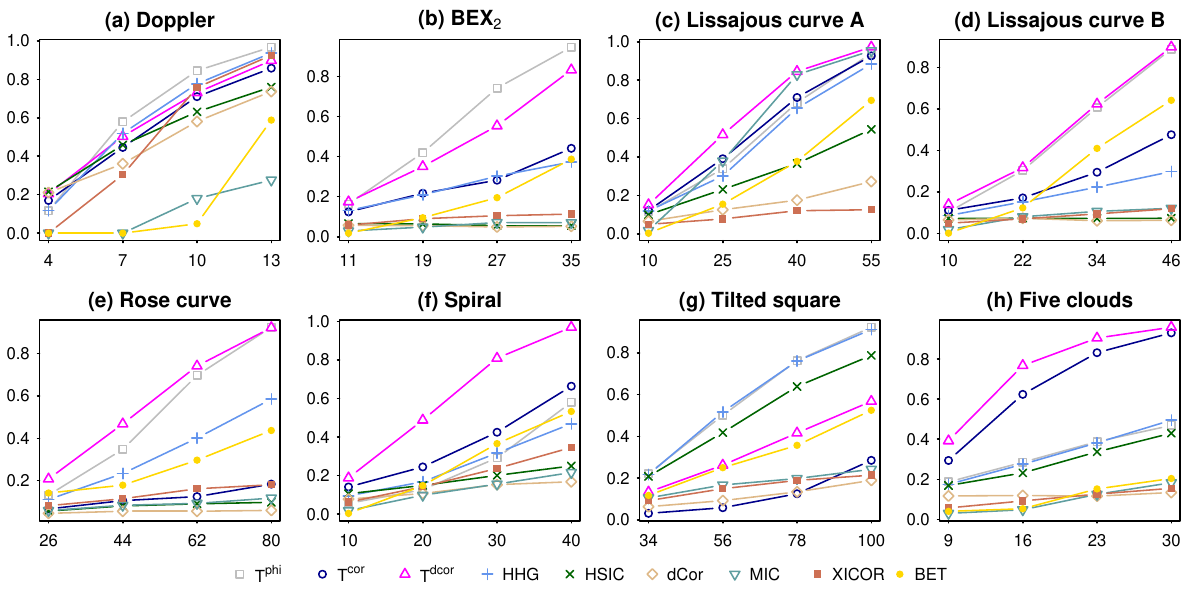}
\caption{The empirical powers of different test methods on `Doppler', `\(BEX_2\)', `Lissajous curve A', `Lissajous curve B', `Rose curve', `Spiral', `Tilted square', and `Five clouds' distributions.}
\label{fig:8_distributions_power}
\end{figure}
It can be seen from this figure that \(\Psi_n^{phi}\) performed best in both `Doppler' and `\(BEX_2\)' and performed well in all other distributions.
\(\Psi_n^{cor}\) performed fairly well in all distributions except for the `Rose curve' and `Tilted square.'
\(\Psi_n^{dcor}\) performed well in all cases, and in `Lissajous curve A', `Lissajous curve B', `Rose curve', `Spiral', `Five clouds', it performed the best.
HHG yielded satisfactory power overall except for `Lissajous curve B'.
The power of HSIC is satisfactory only in the `Doppler' and `Tilted square' distributions, not otherwise.
dCor, MIC, xicor did not perform well.
BET performed well for larger sample sizes in `Lissajous curve B', `Rose curve' and in `Spiral'.

Next, we considered four datasets to assess the performance of our methods under different levels of noise.
In these examples, \(X=f(Z)+\lambda\epsilon_X\) and \(Y=g(Z)+\lambda\epsilon_Y\), where \(f\) and \(g\) are continuous non-constant functions, \(Z\) is a random variable, \(\lambda\) is a non-negative constant, and \(\epsilon_X\) and \(\epsilon_Y\) are i.i.d. normal random variables that are also independent of \(Z\).
As \(\lambda\) increases, the noise level increases.
In particular, we considered 4 distribution named - `Parabola \(\lambda\)', `Circle \(\lambda\)', `Sine \(\lambda\)' and `Laminscate \(\lambda\)', descriptions of each of these can be found in Table \ref{tab:noise_distribution}.
\begin{table}[h]
\centering
\scalebox{0.85}{
\begin{tabular}{l l}
\hline
\hline
Distribution & Description \\
\hline
\hline
(a) Parabola \(\lambda\) & \(U\sim U(-1,1), X=U^2+2\lambda\epsilon_X, Y=U+2\lambda\epsilon_Y\)\\
\hline
(b) Circle \(\lambda\) & \(\Theta\sim U(0,2\pi), X=\cos(\Theta)+6\lambda\epsilon_X, Y=\sin(\Theta)+6\lambda\epsilon_Y\)\\
\hline
(c) Sine \(\lambda\) & \(\Theta\sim U(0,2\pi), X=\Theta+3\lambda\epsilon_X, Y=\sin(\Theta)+3\lambda\epsilon_Y\)\\
\hline
(d) Laminscate \(\lambda\) & \(\Theta\sim U(0,2\pi), X=\frac{\cos(\Theta)}{(1+\sin(\Theta))^2}+2\lambda\epsilon_X, Y=\frac{\sin(\Theta)\cos(\Theta)}{(1+\sin(\Theta))^2}+2\lambda\epsilon_Y\)\\
\hline
\hline
\end{tabular}
}
\caption{The descriptions of `Parabola \(\lambda\)', `Circle \(\lambda\)', `Sine \(\lambda\)' and `Laminscate \(\lambda\)' distributions. Possible values of lambda are \(\lambda=0,1,2\). Here, \(\epsilon_X, \epsilon_Y\) are i.i.d. random variables following \(N(0,1/60^2)\), independent of \(U\) and \(\Theta\).}
\label{tab:noise_distribution}
\end{table}
We considered three noise levels by taking \(\lambda=0,1,2\).
Scatter plots of the twelve (\(4\times 3=12\)) distributions are presented in Figure \ref{fig:noise_distribution_plot} in the Appendix B.

The empirical powers of each of these tests are presented in Figure \ref{fig:noise_distribution_power}.
\begin{figure}[h]
\centering
\includegraphics[width=\linewidth]{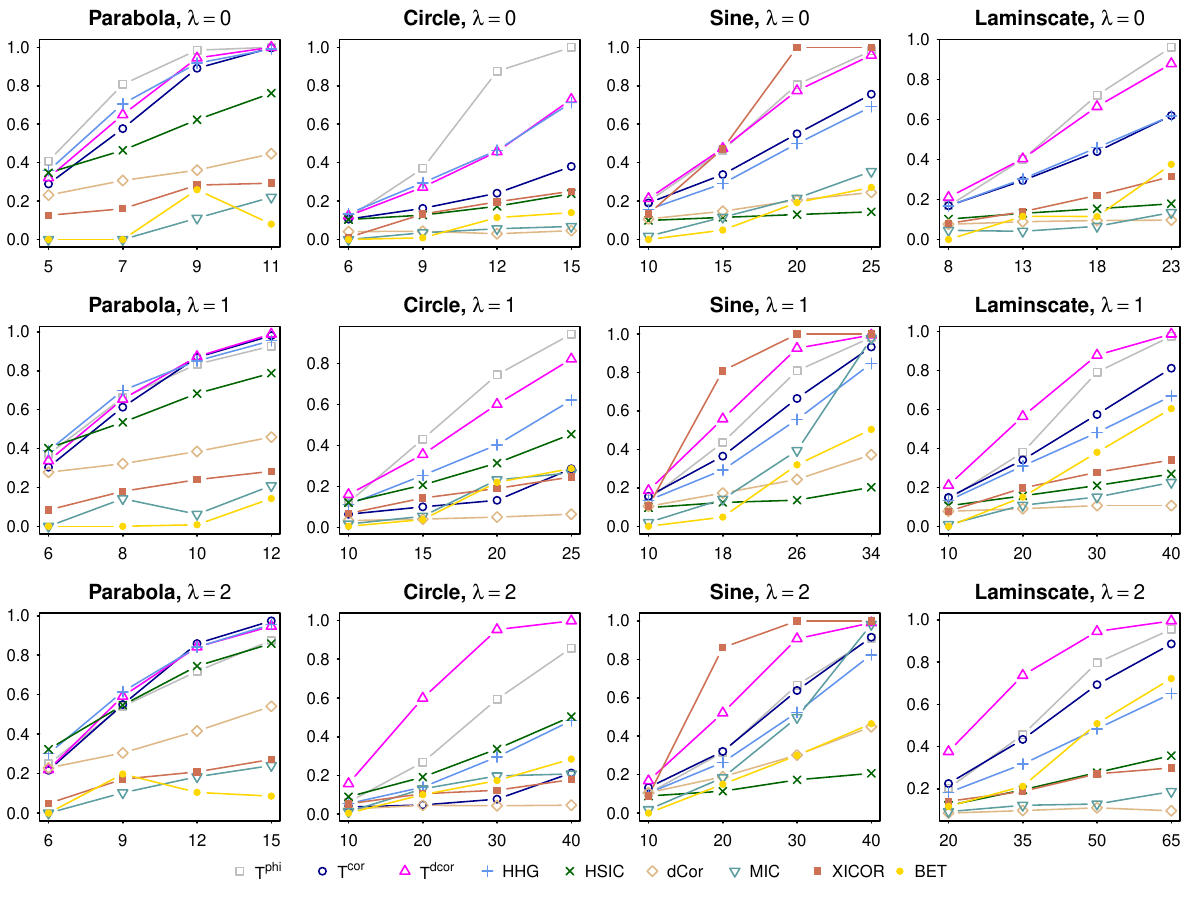}
\caption{The empirical power of different test methods on `Parabola \(\lambda\)', `Circle \(\lambda\)', `Sine \(\lambda\)' and `Laminscate \(\lambda\)' distributions for \(\lambda=0,1,2\).}
\label{fig:noise_distribution_power}
\end{figure}
It can be observed from this figure that \(\Psi_n^{phi}\) performs best or next to best in the absence of noise.
Even in the presence of noise, the power of \(\Psi_n^{phi}\) is quite good.
As the noise level increases, \(\Psi_n^{dcor}\) takes the lead.
The overall performance of \(\Psi_n^{cor}\) is competitive except for the `Circle \(\lambda\)' case.
HHG performed well but it lagged behind our proposed tests most of the time.
HSIC performed well in `Parabola \(\lambda\)' and `Circle \(\lambda\)' distributions, but didn't achieve satisfactory power in other distributions.
dCor had somewhat competitive power in `Parabola \(\lambda\)', but not in other distributions.
MIC had low power overall, except for higher sample sizes in the `Sine \(\lambda\)' distributions.
Xicor also performed poorly except for `Sine \(\lambda\)', where it performed the best.
This is not unexpected from xicor, as it is mainly designed to detect whether \(Y\) is a function of \(X\).
BET performed somewhat well in `Laminscate \(\lambda\)' distribution.

\section{Performance on real dataset}

To demonstrate the application of our proposed multiscale framework, we considered an interesting dataset known as Galton's peas dataset.
This dataset is available under the name \textit{peas} in the R package \textit{psychTools}.
Collected in 1875, this dataset comprises 700 observations of the average diameters of sweet peas from both mother plants and their offspring.
An intriguing aspect of this dataset, pointed out by \cite{chatterjee2021new}, is that the mean diameter of peas produced by a mother plant can be viewed as a function of the mean diameter of their offspring.
However, the opposite does not hold.
When tested on the full dataset, our proposed tests, as well as other competing methods, rejected the null hypothesis of independence.
To compare the performance of these test methods, we evaluated their empirical power using random subsets of the dataset.
Four sample sizes were considered: 10, 20, 30, and 40.
For each sample size, we drew random samples without replacement (SRSWOR) from the full dataset and applied all test methods.
This step was repeated independently 1,000 times to compute the empirical power of each method. 
As in the previous section, we fixed the number of permutations at 1,000 and set the significance level at 5\%.  
Table \ref{tab:galton} summarizes the empirical power of all tests across the different sample sizes.
\begin{table}[h]
    \centering
    \scalebox{0.85}{
    \begin{tabular}{cccccccccc}
        \hline
        \hline 
        \(n\) & \(\Psi_n^{phi}\) & \(\Psi_n^{cor}\) & \(\Psi_n^{dcor}\) & HHG & HSIC & dCor & MIC & XICOR & BET\\
        \hline
        \hline 
         10 & 0.151 & 0.238 & 0.218 & 0.191 & 0.113 & 0.165 & 0.041 & 0.128 & 0.313 \\
         20 & 0.272 & 0.554 & 0.565 & 0.360 & 0.175 & 0.271 & 0.197 & 0.197 & 0.614 \\
         30 & 0.463 & 0.858 & 0.870 & 0.604 & 0.233 & 0.426 & 0.373 & 0.231 & 0.806 \\
         40 & 0.654 & 0.980 & 0.981 & 0.783 & 0.314 & 0.556 & 0.714 & 0.286 & 0.921 \\
         \hline 
         \hline 
    \end{tabular}
    }
    \caption{The empirical power of different test methods on Galton's pea dataset.}
    \label{tab:galton}
\end{table}
The results indicate that \(\Psi_n^{cor}\) and \(\Psi_n^{dcor}\) outperformed all other methods for sample sizes of 30 and 40.
BET also performed very well.
HHG ranked next in performance.
Despite the presence of a huge number of ties in the dataset, \(\Psi_n^{phi}\) demonstrated good performance.
Other methods lagged behind the ones mentioned above.

\section{A visual tool}

Our proposed method works by collecting information on the magnitude of dependence present within the neighborhoods of different sizes around the observation points.
The Z-scores \(z_1,\ldots,z_{n-1}\) calculated based on the sample observations provide insights into the strength of dependence across these different neighborhood sizes.
This is already visually illustrated in Figures \ref{fig:zk_plots} and \ref{fig:zk_plots2} for the distributions described in Table \ref{tab:zk_plots_datasets} and in Figure \ref{fig:zk_plots_datasets}.
Similarly, we can exploit this visual tool to real-world datasets for visualizing the strength of dependence across these different neighborhood sizes.

We illustrate this using the yeast gene expression dataset, initially examined by \cite{spellman1998comprehensive}.
The original study analyzed transcript levels of 6,223 yeast genes across 23 consecutive time points.
For our analysis, we used the R package \textit{minerva}, which provides a revised version of the dataset with 4,381 genes, excluding those with missing observations.
For this illustration, we selected four well-behaved genes—`YBL003C', `YDR224C', `YGR004C', and `YHR218W'—which were also chosen by \cite{chatterjee2021new} to demonstrate XICOR's effectiveness in detecting functional associations. 
Additionally, we selected four genes—`YBR198C', `YBR300C', `YEL049W', and `YEL057C'—which rank among those with the lowest XICOR values.
For these eight genes, we calculated the Z-scores \(z_1,\ldots,z_{22}\) based on \(T^{phi}\) statistics.
The expression of these genes and corresponding Z-scores are plotted in Figure \ref{fig:expression_zscores_plot}.
We applied a 4-point moving average to smooth the Z-score plot for better visualization.
\begin{figure}[h]
\centering
\includegraphics[width=\linewidth]{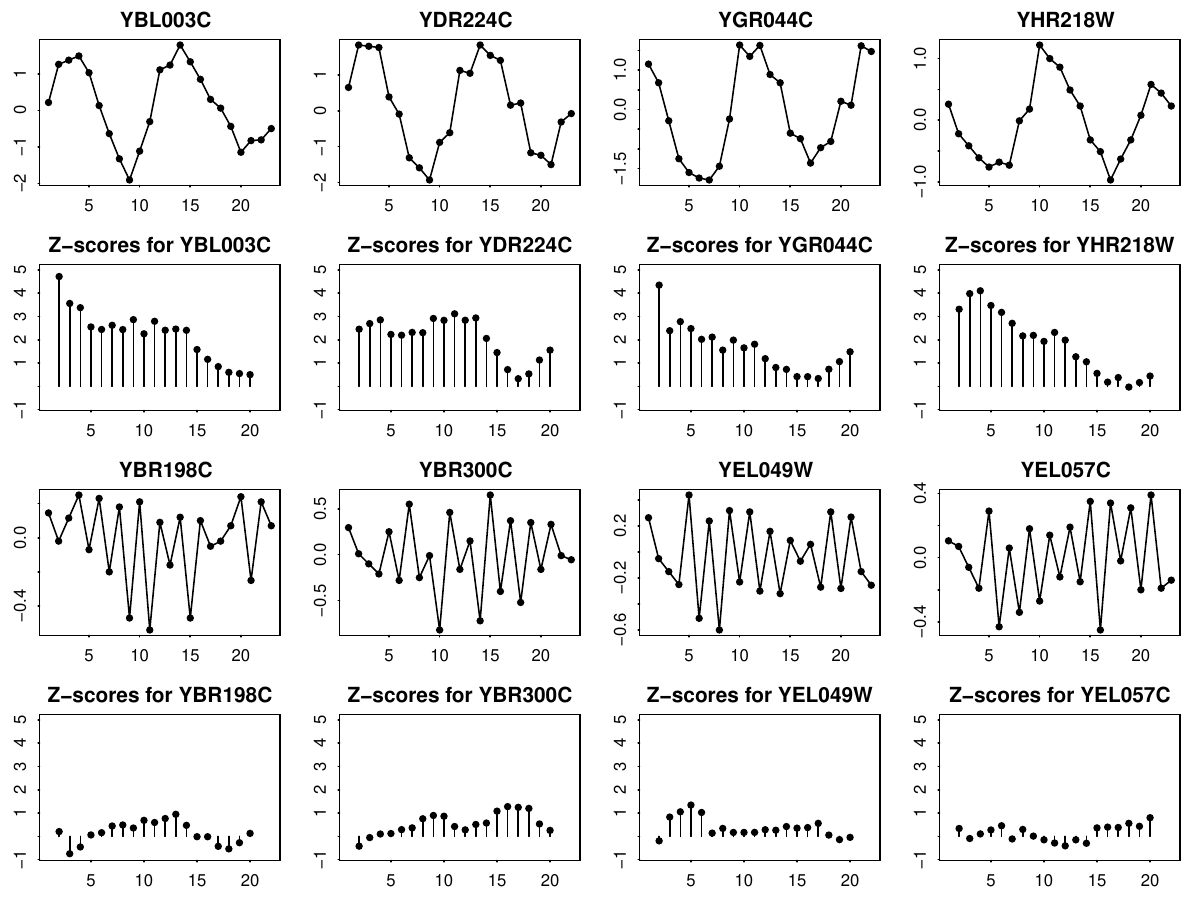}
\caption{Gene expressions for a few selected genes from yeast gene expression data \citep{spellman1998comprehensive}, along with their moving average Z-score plots.}
\label{fig:expression_zscores_plot}
\end{figure}

For the four well-behaved genes, it is evident that smaller neighborhoods capture a significant amount of dependence information.
However, as the neighborhood size increases, the amount of information gradually decreases.
In contrast, for the other four genes, dependence information remains absent across all neighborhood sizes.

\section{Discussion and conclusion}

We proposed a general framework for testing independence, which can utilize existing independence tests in different neighborhoods of the dataset and combine the results in a meaningful way.
This approach has been shown to enhance performance when the variables are explicitly or implicitly functionally dependent.
Additionally, we introduced a novel test of independence that leverages a similar framework.
It is important to note that multiple approaches can be taken towards selecting neighborhoods, and both the power and the complexity of the resulting method can vary based on it.
Therefore, an optimal selection of neighborhoods could be a potential direction for future research.
Our proposed method assigns equal weight to each neighborhood, from nearest to farthest.
The potential benefits of using varying weights for different neighborhoods could be explored in future research.

\appendix

\section*{Appendix A}

\begin{lemma}
    \label{lem:complexity}
    For the proposed special test method, for a fixed \(1 \leq i \leq n\), the computation of \(T_{i,j}\) for all \(1 \leq j \neq i \leq n\) can be done collectively in \(\mathcal{O}(n \log n)\) operations.
\end{lemma}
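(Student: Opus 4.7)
The plan is to reduce this to offline two-dimensional orthogonal range counting, for which the classical sweep-line plus Fenwick (binary indexed) tree construction achieves the claimed $\mathcal{O}(n \log n)$ bound.

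Fix $i$ and, for every $k \neq i$, define $u_k := |x_k - x_i|$ and $v_k := |y_k - y_i|$, and record the sign pattern $(\operatorname{sgn}(x_k - x_i), \operatorname{sgn}(y_k - y_i))$. Then the membership condition $(x_k, y_k) \in N_{i,j}$ is exactly $u_k \leq u_j$ and $v_k \leq v_j$, and the quadrant of $(x_k, y_k)$ relative to $(x_i, y_i)$ is read off directly from the sign pattern. Consequently each of $a_{i,j}, b_{i,j}, c_{i,j}, d_{i,j}$ has the form $\#\{k \in S_q : u_k \leq u_j,\ v_k \leq v_j\}$, where $S_q$ is the subset of indices with a fixed quadrant label $q$.

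With this reformulation I would first compute $u_k$, $v_k$, and sign patterns in $\mathcal{O}(n)$ time, obtain the rank of each $v_k$ among the $v$-values by sorting in $\mathcal{O}(n \log n)$, sort the indices in increasing order of $u$, and initialize four Fenwick trees $B_a, B_b, B_c, B_d$ over the $v$-ranks, one per quadrant. The main loop sweeps the indices in increasing $u$-order, grouping indices with equal $u$-value into a single batch: for each batch, first insert every index with a well-defined quadrant into the tree of its quadrant at position $\mathrm{rank}(v_k)$; then, for every $j$ in the batch, query each of the four trees for the prefix sum up to $\mathrm{rank}(v_j)$ to recover $a_{i,j}, b_{i,j}, c_{i,j}, d_{i,j}$. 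Each of the $\mathcal{O}(n)$ insertions and queries costs $\mathcal{O}(\log n)$, and the closed-form combination into $T_{i,j}$ costs $\mathcal{O}(1)$ per $j$, yielding an aggregate cost of $\mathcal{O}(n \log n)$.

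The main obstacle, though not a deep one, is correct tie and degeneracy handling. Points with $x_k = x_i$ or $y_k = y_i$ lie on a separating axis and contribute to no quadrant, so they are simply omitted from every Fenwick tree; this is consistent with the convention that $T_{i,j} = 0$ when the denominator vanishes. Ties in $u$ are exactly the reason for the batched insert-then-query ordering: this ensures that any $k$ with $u_k = u_j$ is already present when $j$ is queried, so it is correctly counted toward $T_{i,j}$. Ties in $v$ are handled automatically by the prefix-sum semantics of the Fenwick tree. Once these edge cases are dispatched, summing the costs of the preprocessing sorts, the $\mathcal{O}(n)$ Fenwick operations at $\mathcal{O}(\log n)$ each, and the final $\mathcal{O}(n)$ combinations establishes the stated $\mathcal{O}(n \log n)$ bound.
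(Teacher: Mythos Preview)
Your proposal is correct and rests on the same key reduction as the paper: rewriting the membership $(x_k,y_k)\in N_{i,j}$ as the dominance condition $u_k\le u_j$, $v_k\le v_j$ in the transformed coordinates $u_k=|x_k-x_i|$, $v_k=|y_k-y_i|$, so that each of $a_{i,j},b_{i,j},c_{i,j},d_{i,j}$ becomes a two-dimensional dominance count restricted to one sign class. Where you diverge is in the counting machinery. The paper sorts by $u$ and then invokes a merge-sort based ``trail count'' (the reverse of the surpasser count) several times: once on the full $v$-array to obtain the total neighborhood size $w_j$, once on the subarray of indices lying in the target quadrant to obtain $a_{i,j}$ directly when $j$ itself lies in that quadrant, and once on the complementary subarray so that $a_{i,j}=w_j-v_k$ for the remaining $j$; the other three cell counts are handled analogously. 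Your sweep-line with four Fenwick trees is more uniform---all four counts for every $j$ emerge from a single pass with no case split on the quadrant of $j$---and your explicit batching for ties makes the argument slightly more general than the paper's, which simply invokes continuity to rule ties out. Conversely, the paper's route needs nothing beyond sorting, which some readers may find conceptually lighter. Both are standard $\mathcal{O}(n\log n)$ solutions to offline dominance counting, and both establish the lemma.
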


\begin{proof}
    Let us fix \(i\) such that \(1\leq i\leq n\).
    At first, we will prove that the sequence \(a_{i,1}, \dots, a_{i,i-1}, a_{i,i+1}, \dots, a_{i,n}\) can be computed in \(\mathcal{O}(n \log n)\) operations.

    For \(1\leq j\neq i\leq n\), \(d^x_j=|x_j-x_i|\) and \(d^y_j=|y_j-y_i|\) can be computed in \(\mathcal{O}(n)\) operations.
    As \((x_1,y_1),\ldots,(x_n,y_n)\) are \(n\) i.i.d. observations from a continuous distribution, there are no ties in \(x\)-coordinate or in \(y\)-coordinate, and hence \(d^x_j\) and \(d^y_j\) are positive for all \(1\leq j\neq i\leq n\) without any ties.
    
    Using the merge sort algorithm, a permutation \(\omega\) of the sequence \((1, \ldots, i-1, i+1, \ldots, n)\) can be computed in \(\mathcal{O}(n \log n)\) operations, such that \(d^x_{\omega(1)} \leq \cdots \leq d^x_{\omega(i-1)} \leq d^x_{\omega(i+1)} \leq \cdots \leq d^x_{\omega(n)}\).

    Next, by iterating through \(j \in \{1, \ldots, i-1, i+1, \ldots, n\}\) in ascending order, we can store in an array \(q\) those \(j\) values that satisfy \(x_{\omega(j)} < x_i\) and \(y_{\omega(j)} > y_i\).
    As a result, for all \(k, l \in \{1, \ldots, \text{length}(q)\}\), we have \(q[k] < q[l]\) whenever \(k < l\), with \(x_{\omega(q[k])} < x_i\) and \(y_{\omega(q[k])} > y_i\).
    The remaining integers in \(\{1, \ldots, i-1, i+1, \ldots, n\}\) that do not satisfy the conditions for \(q\) can be stored in another array \(r\) in ascending order.
    Thus, for all \(k, l \in \{1, \ldots, \text{length}(r)\}\), we have \(r[k] < r[l]\) whenever \(k < l\), with \(x_{\omega(r[k])} > x_i\) or \(y_{\omega(r[k])} < y_i\).
    This iterative checking and storing process requires only \(\mathcal{O}(n)\) operations.
    Note that \(q\) and \(r\) have no common elements, and together they contain all elements from the set \(\{1, \ldots, i-1, i+1, \ldots, n\}\).

    Given an array of real numbers \([s_1, \ldots, s_m]\), the `surpasser count' is defined as an array of integers \([t_1, \ldots, t_m]\) of the same length, where \(t_j = \sum_{k=j+1}^m I[s_j < s_k]\) for \(1 \leq j \leq m\).
    Using the merge sort technique, this can be computed in \(\mathcal{O}(m \log m)\) operations \citep[see][Chapter 2]{bird2010pearls}.
    Instead of using the surpasser count algorithm directly, we'll use the `trail count'.
    For a given array of real numbers \([s_1, \ldots, s_m]\), the trail count is defined as an array of integers \([t_1, \ldots, t_m]\) of the same length, where \(t_j = \sum_{k=1}^{j} I[s_j \geq s_k] = 1 + \sum_{k=1}^{j-1} I[s_j > s_k]\) for \(1 \leq j \leq m\).
    Essentially, the trail count is the surpasser count applied to the reversed array, thus maintaining the same complexity.

    We apply the trail count to the array \([d^y_{\omega(1)}, \ldots, d^y_{\omega(i-1)}, d^y_{\omega(i+1)}, \ldots, d^y_{\omega(n)}]\), resulting in the array, say, \([w_1, \ldots, w_{i-1}, w_{i+1}, \ldots, w_n]\).
    It's easy to verify that \(w_k\) represents the number of observations, excluding \((x_i, y_i)\), that are within the neighborhood \(N_{i,\omega(k)}\) for \(1 \leq k \neq i \leq n\).
    Next, we apply the trail count to the array \([d^y_{\omega(q[1])}, \ldots, d^y_{\omega(q[\text{length}(q)])}]\), resulting in the array, say, \([u_1, \ldots, u_{\text{length}(q)}]\).
    It can be verified that \(u_k\) is the number of observations that fall within the set \(\{(x,y) : (x,y) \in N_{i,\omega(q[k])}, x < x_i, y > y_i\}\) for \(1 \leq k \leq \text{length}(q)\).
    Thus, \(a_{i,q[k]} = u_k\) for \(1 \leq k \leq \text{length}(q)\).
    Similarly, applying the trail count to the array \([d^y_{\omega(r[1])}, \ldots, d^y_{\omega(r[\text{length}(r)])}]\) results in the array, say, \([v_1, \ldots, v_{\text{length}(r)}]\).
    It can be verified that \(v_k\) represents the number of points in the set \(\{(x,y) : (x,y) \in N_{i,\omega(r[k])}, x > x_i \text{ or } y < y_i\}\) for \(1 \leq k \leq \text{length}(r)\).
    Therefore, \(a_{i,r[k]} = w_{r[k]} - v_k\) for \(1 \leq k \leq \text{length}(r)\).
    Hence, all \(a_{i,j}\) for \(1 \leq j \neq i \leq n\) can be computed together in \(\mathcal{O}(n \log n)\) operations.

    Similar steps can be taken to show that for a fixed \(i\), \(b_{i,j},c_{i,j},d_{i,j}\) for \(1\leq j\neq i\leq n\) can be computed collectively in \(\mathcal{O}(n \log n)\) operations.
    Since \(T_{i,j}\) is a function of \(a_{i,j}, b_{i,j}, c_{i,j}\), and \(d_{i,j}\), it directly follows that computing \(T_{i,j}\) collectively for \(1\leq j\neq i\leq n\) also requires \(\mathcal{O}(n \log n)\) operations.
\end{proof}

\section*{Appendix B}

\begin{figure}[H]
\centering
\includegraphics[width=0.85\linewidth]{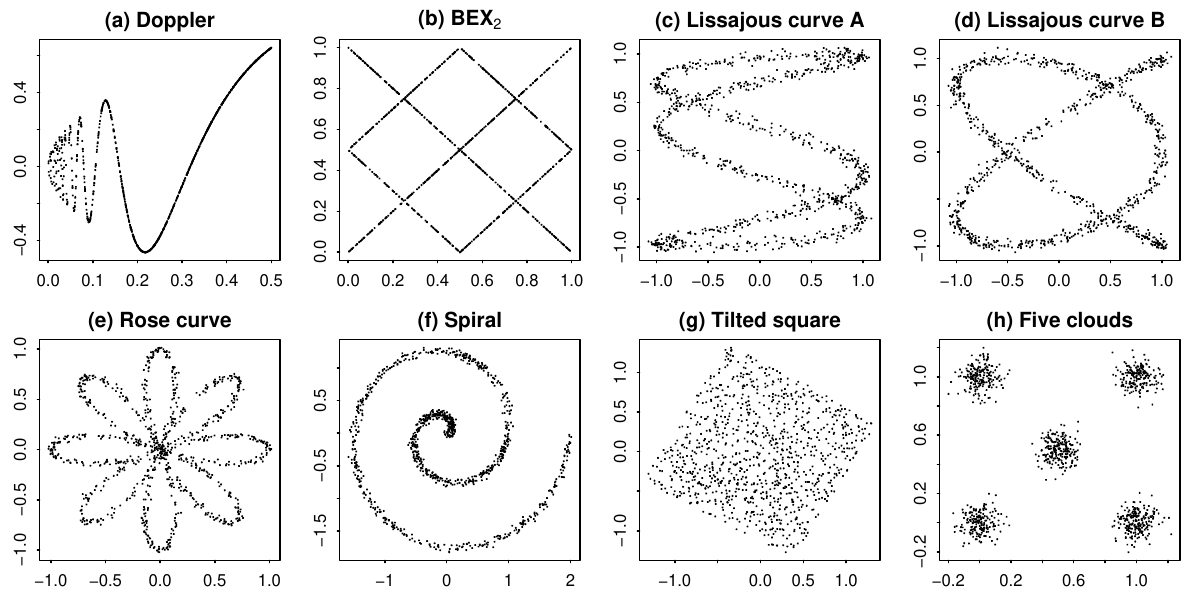}
\caption{The scatter plots of `Doppler', `\(BEX_2\)', `Lissajous curve A', `Lissajous curve B', `Rose curve', `Spiral', `Tilted square' and `Five clouds' distributions.}
\label{fig:8_distributions_datasets}
\end{figure}

\begin{figure}[H]
\centering
\includegraphics[width=0.85\linewidth]{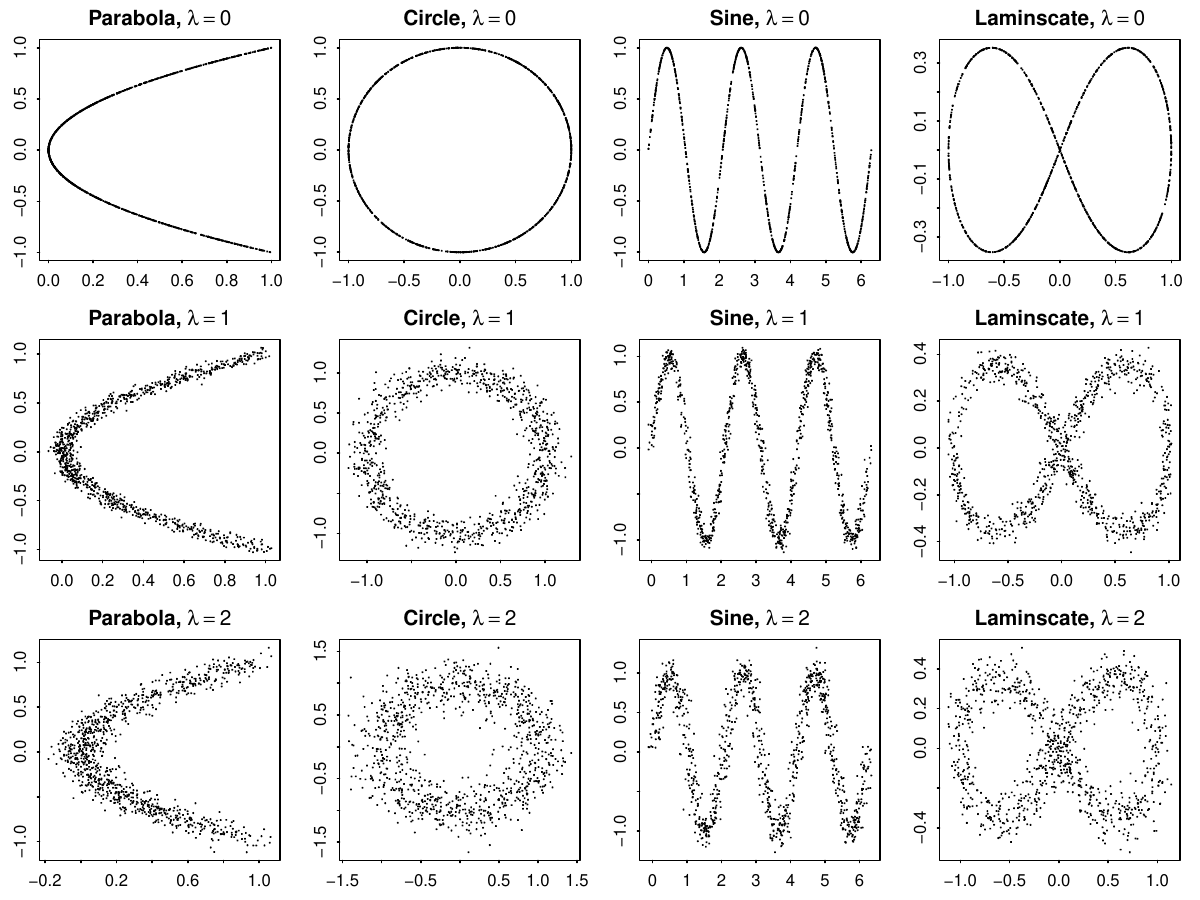}
\caption{The plot of `Parabola \(\lambda\)', `Circle \(\lambda\)', `Sine \(\lambda\)' and `Laminscate \(\lambda\)' distributions for \(\lambda=0,1,2\).}
\label{fig:noise_distribution_plot}
\end{figure}

%\pagebreak

\bibliographystyle{plainnat}
\bibliography{reference}

\begin{thebibliography}{19}
\providecommand{\natexlab}[1]{#1}
\providecommand{\url}[1]{\texttt{#1}}
\expandafter\ifx\csname urlstyle\endcsname\relax
  \providecommand{\doi}[1]{doi: #1}\else
  \providecommand{\doi}{doi: \begingroup \urlstyle{rm}\Url}\fi

\bibitem[Albanese et~al.(2012)Albanese, Filosi, Visintainer, Riccadonna, Jurman, and Furlanello]{minervaR}
Davide Albanese, Michele Filosi, Roberto Visintainer, Samantha Riccadonna, Giuseppe Jurman, and Cesare Furlanello.
\newblock Minerva and minepy: a c engine for the mine suite and its r, python and matlab wrappers.
\newblock \emph{Bioinformatics}, page bts707, 2012.

\bibitem[Bird(2010)]{bird2010pearls}
Richard Bird.
\newblock \emph{Pearls of functional algorithm design}.
\newblock Cambridge University Press, 2010.

\bibitem[Chatterjee(2021)]{chatterjee2021new}
Sourav Chatterjee.
\newblock A new coefficient of correlation.
\newblock \emph{Journal of the American Statistical Association}, 116\penalty0 (536):\penalty0 2009--2022, 2021.

\bibitem[Chatterjee and Holmes(2023)]{XICORR}
Sourav Chatterjee and Susan Holmes.
\newblock \emph{XICOR: Robust and generalized correlation coefficients}, 2023.
\newblock URL \url{https://CRAN.R-project.org/package=XICOR}.
\newblock https://github.com/spholmes/XICOR.

\bibitem[Chaudhuri and Hu(2019)]{chaudhuri2019fast}
Arin Chaudhuri and Wenhao Hu.
\newblock A fast algorithm for computing distance correlation.
\newblock \emph{Computational statistics \& data analysis}, 135:\penalty0 15--24, 2019.

\bibitem[Gretton et~al.(2007)Gretton, Fukumizu, Teo, Song, Sch{\"o}lkopf, and Smola]{gretton2007kernel}
Arthur Gretton, Kenji Fukumizu, Choon Teo, Le~Song, Bernhard Sch{\"o}lkopf, and Alex Smola.
\newblock A kernel statistical test of independence.
\newblock \emph{Advances in neural information processing systems}, 20, 2007.

\bibitem[Heller et~al.(2013)Heller, Heller, and Gorfine]{heller2013consistent}
Ruth Heller, Yair Heller, and Malka Gorfine.
\newblock A consistent multivariate test of association based on ranks of distances.
\newblock \emph{Biometrika}, 100\penalty0 (2):\penalty0 503--510, 2013.

\bibitem[Hoeffding(1948)]{hoeffding1948non}
Wassily Hoeffding.
\newblock A non-parametric test of independence.
\newblock \emph{The Annals of Mathematical Statistics}, 19\penalty0 (4):\penalty0 546--557, 1948.

\bibitem[Kaufman et~al.(2019)Kaufman, based in part on an earlier implementation~by Ruth~Heller, and Heller.]{HHGR}
Barak Brill \&~Shachar Kaufman, based in part on an earlier implementation~by Ruth~Heller, and Yair Heller.
\newblock \emph{HHG: Heller-Heller-Gorfine Tests of Independence and Equality of Distributions}, 2019.
\newblock URL \url{https://CRAN.R-project.org/package=HHG}.
\newblock R package version 2.3.

\bibitem[Kendall(1938)]{kendall1938new}
Maurice~G. Kendall.
\newblock A new measure of rank correlation.
\newblock \emph{Biometrika}, 30\penalty0 (1/2):\penalty0 81--93, 1938.

\bibitem[Pfister and Peters(2019)]{dHSICR}
Niklas Pfister and Jonas Peters.
\newblock \emph{dHSIC: Independence Testing via Hilbert Schmidt Independence Criterion}, 2019.
\newblock URL \url{https://CRAN.R-project.org/package=dHSIC}.
\newblock R package version 2.1.

\bibitem[Reshef et~al.(2011)Reshef, Reshef, Finucane, Grossman, McVean, Turnbaugh, Lander, Mitzenmacher, and Sabeti]{reshef2011detecting}
David~N Reshef, Yakir~A Reshef, Hilary~K Finucane, Sharon~R Grossman, Gilean McVean, Peter~J Turnbaugh, Eric~S Lander, Michael Mitzenmacher, and Pardis~C Sabeti.
\newblock Detecting novel associations in large data sets.
\newblock \emph{science}, 334\penalty0 (6062):\penalty0 1518--1524, 2011.

\bibitem[Rizzo and Szekely(2022)]{energyR}
Maria Rizzo and Gabor Szekely.
\newblock \emph{energy: E-Statistics: Multivariate Inference via the Energy of Data}, 2022.
\newblock URL \url{https://CRAN.R-project.org/package=energy}.
\newblock R package version 1.7-11.

\bibitem[Roy(2020)]{roy2020someb}
Angshuman Roy.
\newblock Some copula-based tests of independence among several random variables having arbitrary probability distributions.
\newblock \emph{Stat}, 9\penalty0 (1):\penalty0 e263, 2020.

\bibitem[Roy et~al.(2020)Roy, Ghosh, Goswami, and Murthy]{roy2020somea}
Angshuman Roy, Anil~K Ghosh, Alok Goswami, and CA~Murthy.
\newblock Some new copula based distribution-free tests of independence among several random variables.
\newblock \emph{Sankhya A}, pages 1--41, 2020.

\bibitem[Spearman(1904)]{spearman1904proof}
Charles Spearman.
\newblock The proof and measurement of association between two things.
\newblock \emph{The American Journal of Psychology}, 15\penalty0 (1):\penalty0 72--101, 1904.

\bibitem[Spellman et~al.(1998)Spellman, Sherlock, Zhang, Iyer, Anders, Eisen, Brown, Botstein, and Futcher]{spellman1998comprehensive}
Paul~T Spellman, Gavin Sherlock, Michael~Q Zhang, Vishwanath~R Iyer, Kirk Anders, Michael~B Eisen, Patrick~O Brown, David Botstein, and Bruce Futcher.
\newblock Comprehensive identification of cell cycle--regulated genes of the yeast saccharomyces cerevisiae by microarray hybridization.
\newblock \emph{Molecular biology of the cell}, 9\penalty0 (12):\penalty0 3273--3297, 1998.

\bibitem[Sz{\'e}kely et~al.(2007)Sz{\'e}kely, Rizzo, and Bakirov]{szekely2007measuring}
G{\'a}bor~J Sz{\'e}kely, Maria~L Rizzo, and Nail~K Bakirov.
\newblock Measuring and testing dependence by correlation of distances.
\newblock \emph{The Annals of Statistics}, 35\penalty0 (6):\penalty0 2769--2794, 2007.

\bibitem[Zhang(2019)]{zhang2019bet}
Kai Zhang.
\newblock Bet on independence.
\newblock \emph{Journal of the American Statistical Association}, 2019.

\end{thebibliography}
\end{document}